\newcolumntype{C}[1]{>{\centering\let\newline\\\arraybackslash\hspace{0pt}}m{#1}}
\def\0{{\bf 0}}
\def\A{{\bf A}}
\def\B{{\bf B}}
\def\C{{\bf C}}
\def\H{{\bf H}}
\def\I{{\bf I}}
\def\I{{\bf I}}
\def\X{{\bf X}}
\def\Y{{\bf Y}}
\def\S{{\bf S}}
\def\x{{\bf x}}
\def\y{{\bf y}}
\def\U{{\bf U}}
\def\u{{\bf u}}
\def\VV{{\bf V}}
\def\W{{\bf W}}
\def\X{{\bf X}}
\def\0{{\bf 0}}
\def\1{{\bf 1}}
\def\OM{{\mathcal O}}
\def\RB{{\mathbb R}}
\newtheorem{theorem}{Theorem}
\title{Discrete Latent Factor Model \\ for Cross-Modal Hashing}
\author{
  Qing-Yuan Jiang, Wu-Jun Li\\
  National Key Laboratory for Novel Software Technology\\
  Collaborative Innovation Center of Novel Software Technology and Industrialization\\
  Department of Computer Science and Technology, Nanjing University, P. R. China \\
  \texttt{jiangqy@lamda.nju.edu.cn, liwujun@nju.edu.cn} \\
}
\begin{document}

\maketitle

\begin{abstract}
Due to its storage and retrieval efficiency, cross-modal hashing~(CMH) has been widely used for cross-modal similarity search in multimedia applications. According to the training strategy, existing CMH methods can be mainly divided into two categories: relaxation-based continuous methods and discrete methods. In general, the training of relaxation-based continuous methods is faster than discrete methods, but the accuracy of relaxation-based continuous methods is not satisfactory. On the contrary, the accuracy of discrete methods is typically better than relaxation-based continuous methods, but the training of discrete methods is time-consuming. In this paper, we propose a novel CMH method, called \underline{d}iscrete \underline{l}atent \underline{f}actor model based cross-modal \underline{h}ashing~(DLFH), for cross modal similarity search. DLFH is a discrete method which can directly learn the binary hash codes for CMH. At the same time, the training of DLFH is efficient. Experiments on real datasets show that DLFH can achieve significantly better accuracy than existing methods, and the training time of DLFH is comparable to that of relaxation-based continuous methods which are much faster than existing discrete methods.
\end{abstract}

\section{Introduction}\label{introduction}
Nearest neighbor~(NN) search plays a fundamental role in many areas including machine learning, data mining, information retrieval, computer vision and so on. In many real applications, there is no need to return exact nearest neighbors for every given query and approximate nearest neighbor~(ANN) is enough to achieve satisfactory performance~\cite{DBLP:conf/focs/AndoniI06,DBLP:conf/stoc/AndoniR15}. Because ANN search might be much faster than exact NN search, ANN search has become an active research topic with a wide range of applications especially for large-scale problems~\cite{DBLP:conf/focs/AndoniI06,DBLP:conf/icml/NeyshaburS15,DBLP:conf/nips/Raziperchikolaei16}.

Among existing ANN search methods, hashing methods have attracted much more attention due to their storage and retrieval efficiency in real applications~\cite{DBLP:conf/nips/WeissTF08,DBLP:conf/nips/RaginskyL09,DBLP:conf/nips/KulisD09,DBLP:conf/icml/WangKC10,DBLP:conf/icml/NorouziF11,DBLP:conf/cvpr/GordoP11,DBLP:conf/icml/LiuWKC11,DBLP:conf/nips/0002FS12,DBLP:conf/icml/RastegariCFHD13,DBLP:conf/nips/LiuMKC14,DBLP:conf/icml/Shrivastava014,DBLP:conf/cvpr/ShenSLS15,DBLP:conf/nips/AndoniILRS15,DBLP:conf/icml/NeyshaburS15,DBLP:conf/cvpr/Carreira-Perpinan15,DBLP:conf/nips/Raziperchikolaei16}. The goal of hashing is to embed data points from the original space into a Hamming space~\cite{DBLP:conf/nips/WeissTF08} where the similarity in the original space is preserved. More specifically, in the Hamming space, each data point will be represented as a binary code. Based on the binary hash code representation, the storage cost can be dramatically reduced, and furthermore we can achieve constant or sub-linear search speed which is much faster than the search speed in the original space~\cite{DBLP:conf/icml/LiuWKC11}.

Early hashing methods are mainly proposed for uni-modal data to perform uni-modal similarity search. In recent years, with the explosive growing of multimedia data in real applications, multi-modal similarity search has attracted a lot of attention. For example, given a text query, a multi-modal similarity search system can return the nearest images or videos in the database. To achieve an efficient performance for large-scale problems, multi-modal hashing~(MMH) has been proposed for multi-modal search~\cite{DBLP:conf/icml/RastegariCFHD13,DBLP:conf/cvpr/DingGZ14,DBLP:conf/aaai/ZhangL14,DBLP:conf/cvpr/LinDH015}. Existing MMH methods can be divided into two major categories: multi-source hashing~(MSH)~\cite{DBLP:conf/mm/SongYHSH11,DBLP:conf/sigir/ZhangWS11} and cross-modal hashing~(CMH)~\cite{DBLP:conf/cvpr/BronsteinBMP10,DBLP:conf/icml/RastegariCFHD13,DBLP:conf/cvpr/DingGZ14}. MSH methods aim to learn binary hash codes by utilizing information from multiple modalities for each point. In other words, all these multiple modalities should be observed for all data points including the query points and those in database under MSH settings. Because it's usually difficult to observe all the modalities in many real applications, the application scenarios for MSH methods are limited. Unlike MSH methods, CMH methods usually require only one modality for a query point to perform search in a database with other modalities. The application scenarios for CMH are more flexible than those for MSH. For example, CMH can perform text-to-image or image-to-text retrieval tasks in real applications. Hence, CMH has gained more attention than MSH~\cite{DBLP:conf/nips/ZhenY12,DBLP:conf/icml/RastegariCFHD13,DBLP:conf/aaai/ZhangL14,DBLP:conf/cvpr/DingGZ14}.

There have appeared many CMH methods. Some CMH methods are unsupervised, including canonical correlation analysis-iterative quantization~(CCA-ITQ)~\cite{DBLP:journals/pami/GongLGP13}, collective matrix factorization hashing~(CMFH)~\cite{DBLP:conf/cvpr/DingGZ14} and alternating co-quantization~(ACQ)~\cite{DBLP:conf/iccv/IrieAT15}. Some others are supervised, including cross modality similarity sensitive hashing~(CMSSH)~\cite{DBLP:conf/cvpr/BronsteinBMP10}, cross view hashing~(CVH)~\cite{DBLP:conf/ijcai/KumarU11}, multi-modal latent binary embedding~(MLBE)~\cite{DBLP:conf/kdd/ZhenY12}, co-regularized hashing~(CRH)~\cite{DBLP:conf/nips/ZhenY12}, predictable dual-view hashing~(PDH)~\cite{DBLP:conf/icml/RastegariCFHD13}, semantic correlation maximization~(SCM)~\cite{DBLP:conf/aaai/ZhangL14}, semantics preserving hashing~(SePH)~\cite{DBLP:conf/cvpr/LinDH015}, and supervised matrix factorization hashing~(SMFH)~\cite{DBLP:conf/ijcai/LiuJWH16}.


According to the training strategy, existing CMH methods, including both unsupervised and supervised methods, can be mainly divided into two categories: relaxation-based continuous methods and discrete methods. Hashing is essentially a discrete learning problem. To avoid the difficulty caused by discrete learning, relaxation-based continuous methods try to solve a relaxed continuous problem with some relaxation strategy. Representative continuous methods include CMFH, CMSSH, CVH, SCM and SMFH. Discrete methods try to directly solve the discrete problem without continuous relaxation. Representative discrete methods include ACQ, MLBE, PDH and SePH. In general, the training of relaxation-based continuous methods is faster than discrete methods, but the accuracy of relaxation-based continuous methods is not satisfactory. On the contrary, the accuracy of discrete methods is typically better than relaxation-based continuous methods, but the training of discrete methods is time-consuming.


In this paper, we propose a novel CMH method, called \underline{d}iscrete \underline{l}atent \underline{f}actor model based cross-modal \underline{h}ashing~(DLFH), for cross modal similarity search. The contributions of DLFH are outlined as follows: (1). DLFH is a supervised CMH method, and in DLFH a novel discrete latent factor model is proposed to model the supervised information. (2). DLFH is a discrete method which can directly learn the binary hash codes without continuous relaxation. (3). A novel discrete learning algorithm is proposed for DLFH, which can be proved to be convergent. Furthermore, the implementation of DLFH is simple, although the mathematical derivation is technical. (4). The training~(learning) of DLFH is still efficient although DLFH is a discrete method. (5). Experiments on real datasets show that DLFH can achieve significantly better accuracy than existing methods, including both relaxation-based continuous methods and existing discrete methods. Experimental results also show that the training speed of DLFH is comparable to that of relaxation-based continuous methods, and is much faster than that of existing discrete methods.


\section{Notations and Problem Definition}\label{notation}
We use bold uppercase letter like $\U$ and bold lowercase letter like $\u$ to denote a matrix and a vector, respectively. The element at the position $(i,j)$ of matrix $\U$ is denoted as $U_{ij}$. The $i$th row of matrix $\U$ is denoted as $\U_{i*}$, and the $j$th column of matrix $\U$ is denoted as $\U_{*j}$. Furthermore, we use $\Vert\cdot\Vert_F$ to denote the Frobenius norm of a matrix. $\U^T$ denotes the transpose of matrix $\U$. $\text{sign}(\cdot)$ is an element-wise sign function.

Without loss of generality, we assume there exist only two modalities in the data although our DLFH can be easily adapted to more modalities. We use $\X=[\x_1,\x_2,\dots,\x_n]^T\in\RB^{n\times d_x}$ and  $\Y=[\y_1,\y_2,\dots,\y_n]^T\in\RB^{n\times d_y}$ to denote the feature vectors of the two modalities~(modality $x$ and modality $y$), where $d_x$ and $d_y$ respectively denote the dimensionality of the feature spaces for two modalities and $n$ is the number of training data. In particular, $\x_i$ and $\y_i$ denote the feature vectors of the two modalities for training point $i$, respectively. Without loss of generality, the data are assumed to be zero-centered which means $\sum_{i=1}^n\x_i=\0$ and $\sum_{i=1}^n\y_i=\0$. Here, we assume that both modalities are observed for all \emph{training} points. However, we do not require that both modalities are observed for \emph{query}~(test) points. Hence, the setting is cross-modal. Actually, DLFH can be easily adapted to cases where some training points are with missing modalities, which will be left for future study. In this paper, we focus on supervised CMH which has shown better accuracy that unsupervised CMH~\cite{DBLP:conf/nips/ZhenY12,DBLP:conf/aaai/ZhangL14,DBLP:conf/cvpr/LinDH015}. In supervised CMH, besides the feature vectors $\X$ and $\Y$, we are also given a cross-modal supervised similarity matrix $\S\in\{0,1\}^{n\times n}$. If $S_{ij}=1$, it means that point $\x_i$ and point $\y_j$ are similar. Otherwise $\x_i$ and $\y_j$ are dissimilar. Here, we assume all elements of $\S$ are observed. But our DLFH can also be adapted for cases with missing elements in $\S$. $S_{ij}$ can be manually labeled by users, or constructed from the labels of point $i$ and point $j$. For example, if point $i$ and point $j$ share common labels, $S_{ij}=1$. Otherwise, $S_{ij} = 0$.

We use $\U\in\{-1,+1\}^{n\times c}$ and $\VV\in\{-1,+1\}^{n\times c}$ to respectively denote the binary codes for modality $x$ and modality $y$, where $\U_{i*}$ and $\VV_{i*}$ respectively denote the binary hash codes of two modalities for point $i$ and $c$ is the length of binary code. The goal of supervised CMH is to learn the binary codes $\U$ and $\VV$, which try to preserve the similarity information in $\S$. In other words, if $S_{ij}=1$, the Hamming distance between $\U_{i*}$ and $\VV_{j*}$ should be as small as possible and vice verse. Furthermore, we also need to learn two hash functions $h_x(\x_q)\in\{-1,+1\}^{c}$ and $h_y(\y_q)\in\{-1,+1\}^{c}$ respectively for modality $x$ and modality $y$, which can compute binary hash codes for any new query point~($\x_q$ or $\y_q$) which is unseen in the training set.

\section{Discrete Latent Factor Model based Cross-Modal Hashing}\label{DLFH}

In this section, we introduce the details of DLFH, including model formulation and learning algorithm.

\subsection{Model Formulation}
Given a binary code pair $\{\U_{i*},\VV_{j*}\}$, we define $\Theta_{ij}$ as: $\Theta_{ij}=\frac{\lambda}{c}\U_{i*}\VV_{j*}^T$,
where $c$ is the code length which is pre-specified, and $\lambda > 0$ is a hyper-parameter denoting a scale factor for tuning.

By using a logistic function, we define $A_{ij}$ as: $A_{ij}=\sigma(\Theta_{ij})=\frac{1}{1+e^{-\Theta_{ij}}}.$
Based on $A_{ij}$, we define the likelihood of the cross-modal similarity $\S$ as: $p(\S\vert\U,\VV)=\prod_{i=1}^n\prod_{j=1}^np(S_{ij}\vert\U,\VV)$, where $p(S_{ij}\vert\U,\VV)$ is defined as: $p(S_{ij}\vert\U,\VV)=S_{ij}A_{ij}+(1-S_{ij})(1-A_{ij})$.

Then the log-likelihood of $\U$ and $\VV$ can be derived as: $L=\log p(\S\vert\U,\VV)
=\sum_{i,j=1}^n\big[S_{ij}\Theta_{ij}-\log(1+e^{\Theta_{ij}})\big]+\text{const}$, where $\text{const}$ is a constant independent of $\U$ and $\VV$.

The model of DLFH tries to maximize the log-likelihood of $\U$ and $\VV$. That is, DLFH tries to solve the following problem:
\begin{align}
\max_{\U,\VV\in\{-1,+1\}^{n\times c}}\;L=\log p(&\S\vert\U,\VV)=\sum_{i=1}^n\sum_{j=1}^n\big[S_{ij}\Theta_{ij}-\log(1+e^{\Theta_{ij}})\big]+\text{const},
\label{obj}
\end{align}


We can find that maximizing the objective function in~(\ref{obj}) exactly matches the goal of hashing. More specifically, the learned binary hash codes try to preserve the similarity information in $\S$.

Please note that latent factor hashing~(LFH)~\cite{DBLP:conf/sigir/ZhangZLG14} has adopted latent factor model for hashing. However, DLFH is different from LFH and is novel due to the following reasons. Firstly, DLFH is for cross-modal supervised hashing but LFH is for uni-modal supervised hashing. Secondly, DLFH is a discrete method which directly learns discrete~(binary) codes without relaxation, but LFH is a relaxation-based continuous method which cannot directly learn the discrete codes.

\subsection{Learning Algorithm}
Problem~(\ref{obj}) is a discrete~(binary) learning problem, which is difficult to solve. One possible solution is to relax the discrete problem to a continuous problem by discarding the discrete~(binary) constraints. Similar relaxation strategies have been adopted by many existing relaxation-based continuous methods like CMFH~\cite{DBLP:conf/cvpr/DingGZ14} and SMFH~\cite{DBLP:conf/ijcai/LiuJWH16}. However, this relaxation may cause the solution to be sub-optimal and the search accuracy to be unsatisfactory.

In this paper, we propose a novel method to directly learn the binary codes without continuous relaxation. The two parameters $\U$ and $\VV$ are learned in an alternating way. More specifically, we design an iterative learning algorithm, and in each iteration we learn one parameter with the other parameter fixed.

\subsubsection{Learning $\U$ with $\VV$ Fixed}

We try to learn $\U$ with $\VV$ fixed. Even if $\VV$ is fixed, it is still difficult to optimize~(learn) the whole $\U$ in one time. For example, if we simply flip the signs of each element to learn $\U$, the total time complexity will be $\OM(2^{n\times c})$ which is very high. Here, we adopt a column-wise learning strategy to optimize one column~(corresponds to one bit for all data points) of $\U$ each time with other columns fixed. The time complexity to directly learn one column is $\OM(2^{n})$ which is still high. Here, we adopt a surrogate strategy~\cite{DBLP:journals/jcgs/LangeHY00} to learn each column, which results in a lower time complexity of $\OM({n^2})$. More specifically, to optimize the $k$th column $\U_{*k}$, we construct a lower bound of $L(\U_{*k})$ and then optimize the lower bound, which can get a closed form solution and make learning procedure simple and efficient. Moreover, the lower-bound based learning strategy can guarantee the solution to converge.

The gradient and Hessian of the objective function $L$ with respect to $\U_{*k}$ can be computed as follows~\footnote{Please note that the objective function $L(\cdot)$ is defined on the whole real space although the variables $\U$ and $\VV$ are constrained to be discrete. Hence, we can still compute the gradient and Hessian for any discrete points $\U$ and $\VV$.}:
\begin{align}
\frac{\partial L}{\partial \U_{*k}}=\frac{\lambda}{c}\sum_{j=1}^n(\S_{*j}-\A_{*j})V_{jk} ,\quad
\frac{\partial^2 L}{\partial \U_{*k}\partial \U_{*k}^T}=-\frac{\lambda^2}{c^2}\text{diag}
(a_{1},a_{2},\cdots,a_{n}), \nonumber
\end{align}
where $\A=[A_{ij}]_{i,j=1}^n$, $a_{i} = \sum_{j=1}^n A_{ij}(1-A_{ij})$, and $\text{diag}(a_{1},a_{2},\cdots,a_{n})$ denotes a diagonal matrix with the $i$th diagonal element being $a_{i}$.

Because $0<A_{ij}<1$, we have $0<A_{ij}(1-A_{ij})<\frac{1}{4}$. We define $\H^{(u)}_k$ as: $\H_k^{(u)}=-\frac{\lambda^2}{4c^2}\sum_{j=1}^n\I = -\frac{n\lambda^2}{4c^2}\I$, where $\I$ is an identity matrix. Then we can get: $\frac{\partial^2 L}{\partial \U_{*k}\partial \U_{*k}^T}\succeq\H_k^{(u)}$, where $\B\succeq\C$ indicates that $\B-\C$ is a positive semi-definite matrix.

Then we construct the lower bound of $L(\U_{*k})$ as follows:
\begin{align}
\widetilde L(\U_{*k})=&L(\U_{*k}(t))+[\U_{*k}-\U_{*k}(t)]^T\frac{\partial L}{\partial \U_{*k}}(t)+\frac{1}{2}[\U_{*k}-\U_{*k}(t)]^T\H^{(u)}_k(t)[\U_{*k}-\U_{*k}(t)],\nonumber
\end{align}
where $\U_{*k}(t)$ is the value of $\U_{*k}$ at the $t$th iteration, and $\frac{\partial L}{\partial \U_{*k}}(t)$ is the gradient with respect to $\U_{*k}(t)$, $\H^{(u)}_k(t) = \H^{(u)}_k$.

Then, we learn the column $\U_{*k}$ by solving the following problem:
\begin{align}
\max_{\U_{*k}}\; \widetilde L(\U_{*k})=\U_{*k}^T\Big[\frac{\partial L}{\partial \U_{*k}}(t)-\H^{(u)}_k(t)\U_{*k}(t)\Big]+\text{const},\quad\text{s.t.}\;\U_{*k}\in\{-1,+1\}^{n}.\label{obj:Um}
\end{align}
Thus, we can get a closed form solution for problem~(\ref{obj:Um}), and use this solution to get $\U_{*k}(t+1)$:
\begin{align}\label{eq:UkUpdate}
\U_{*k}(t+1)=\text{sign}[\frac{\partial L}{\partial \U_{*k}}(t)-\H^{(u)}_k(t)\U_{*k}(t)].
\end{align}

\subsubsection{Learning $\VV$ with $\U$ Fixed}

When $\U$ is fixed, we adopt a similar strategy as that for $\U$ to learn $\VV$.

%
%
Specifically, we can get the following closed form solution to update $\VV_{*k}$:
\begin{align}\label{eq:VkUpdate}
\VV_{*k}(t+1)=\text{sign}[\frac{\partial L}{\partial \VV_{*k}}(t)-\H^{(v)}_k(t)\VV_{*k}(t)],
\end{align}
where $\frac{\partial L}{\partial \VV_{*k}} = \frac{\lambda}{c}\sum_{i=1}^n(\S_{i*}^T-\A_{i*}^T)U_{ik}$ and $\H^{(v)}_k(t)= -\frac{n\lambda^2}{4c^2}\I$.

The learning algorithm for DLFH is summarized in Algorithm~\ref{alg:Framwork}, which can be easily implemented.
\begin{algorithm}[t]
\caption{Learning algorithm for DLFH}
\label{alg:Framwork}
\begin{algorithmic}
\REQUIRE
    $\S\in\{1,0\}^{n\times n}$: supervised similarity matrix, $c$: code length.
\ENSURE
    $\U$ and $\VV$: binary codes for two modalities.
\STATE {\bf Procedure}: initialize $\U$ and $\VV$.
\FOR {$t=1\to T$}
    \FOR {$k=1\to c$}
        \STATE Update $\U_{*k}$ according to (\ref{eq:UkUpdate}).
    \ENDFOR
    \FOR {$k=1\to c$}
        \STATE Update $\VV_{*k}$ according to (\ref{eq:VkUpdate}).
    \ENDFOR
\ENDFOR
\end{algorithmic}
\end{algorithm}

\begin{theorem}\label{theorem:converge}
The learning algorithm for DLFH which is shown in Algorithm~\ref{alg:Framwork} is convergent.
\end{theorem}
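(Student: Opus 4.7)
The plan is to show that the objective $L$ is non-decreasing across every sub-update performed by Algorithm~\ref{alg:Framwork}, and then appeal to boundedness. The whole argument rests on recognizing the procedure as a minorize–maximize scheme in which $\widetilde L$ is a quadratic minorant of $L$ built from the constant curvature bound $\H_k^{(u)}$ (resp.\ $\H_k^{(v)}$).

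First I would verify the surrogate property: $\widetilde L(\U_{*k})\le L(\U_{*k})$ for every $\U_{*k}\in\RB^n$, with equality at $\U_{*k}=\U_{*k}(t)$. Equality at $\U_{*k}(t)$ is immediate from the definition of $\widetilde L$. For the inequality, since $L$ is smooth on $\RB^n$ (as noted in the footnote of the paper), Taylor's theorem with Lagrange remainder gives, for some $\xi$ on the segment joining $\U_{*k}(t)$ and $\U_{*k}$,
\begin{align*}
L(\U_{*k}) = L(\U_{*k}(t)) + (\U_{*k}-\U_{*k}(t))^{T}\frac{\partial L}{\partial \U_{*k}}(t) + \tfrac12(\U_{*k}-\U_{*k}(t))^{T}\tfrac{\partial^{2} L}{\partial \U_{*k}\partial \U_{*k}^{T}}(\xi)(\U_{*k}-\U_{*k}(t)).
\end{align*}
The Hessian bound derived in the paper, $\tfrac{\partial^{2} L}{\partial \U_{*k}\partial \U_{*k}^{T}}\succeq \H_k^{(u)}$, holds pointwise in $\U$; plugging it into the quadratic term yields $L(\U_{*k})\ge\widetilde L(\U_{*k})$.

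Next I would establish monotonicity. The sign update~(\ref{eq:UkUpdate}) is precisely the maximizer of the linear-in-$\U_{*k}$ surrogate $\widetilde L$ over $\{-1,+1\}^{n}$, so $\widetilde L(\U_{*k}(t+1))\ge\widetilde L(\U_{*k}(t))$. Chaining with the surrogate property,
\begin{align*}
L(\U_{*k}(t+1)) \ge \widetilde L(\U_{*k}(t+1)) \ge \widetilde L(\U_{*k}(t)) = L(\U_{*k}(t)).
\end{align*}
The same reasoning applies verbatim to each $\VV_{*k}$ sub-update using the surrogate built from $\H_k^{(v)}$. Hence $L$ is non-decreasing along the $2c$ sub-updates performed during any outer iteration.

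Finally, I would note that $L$ is bounded above: when $\U_{i*},\VV_{j*}\in\{-1,+1\}^{c}$ we have $|\Theta_{ij}|\le\lambda$, so each summand $S_{ij}\Theta_{ij}-\log(1+e^{\Theta_{ij}})$ lies in a bounded interval (equivalently, the feasible set is finite). A bounded monotone sequence converges, which proves the theorem. The only delicate point I foresee is Step~1: the sign rule lives on the discrete set $\{-1,+1\}^{n}$, but the Hessian comparison and the Taylor expansion must be established on all of $\RB^{n}$ so that the minorization is valid at the candidate next iterate before one restricts back to the binary constraint; once this is made explicit, the rest of the argument is a standard MM-convergence computation.
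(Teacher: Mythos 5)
Your proposal is correct and follows essentially the same route as the paper: the paper's proof is a brief sketch that asserts monotonicity of $L$ under each column update plus boundedness above, and your minorize--maximize argument simply supplies the details (the pointwise Hessian bound, the Taylor/minorant inequality, and the fact that the sign update maximizes the surrogate over $\{-1,+1\}^n$) that justify the asserted monotonicity.
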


\begin{proof}
Due to space limitation, we just give a brief proof here.

Each time we update either $\U_{*k}$ or $\VV_{*k}$, the objective function value will not decrease. That is to say, we can guarantee that $L(\U_{*k}(t+1))\geq L(\U_{*k}(t))$ and $L(\VV_{*k}(t+1))\geq L(\VV_{*k}(t))$. Furthermore, the objective function $L(\cdot)$ is upper-bounded by 0. Hence, Algorithm~\ref{alg:Framwork} will converge. Because $L(\cdot)$ is non-convex, the learned solution will converge to a local optimum.
\end{proof}

\subsubsection{Stochastic Learning Strategy}\label{sec:stochaticLearning}
We can find that the computational cost for learning $\U_{*k}$ and $\VV_{*k}$ is $\OM(n^2)$. DLFH will become intractable when the size of training set is large. Here, we design a stochastic learning strategy to avoid high computational cost.

It is easy to see that the high computational cost mainly comes from the gradient computation for both $\U_{*k}$ and $\VV_{*k}$. In our stochastic learning strategy, we randomly sample $m$ columns~(rows) of $\S$ to compute $\frac{\partial L}{\partial \U_{*k}}$~($\frac{\partial L}{\partial \VV_{*k}}$) during each iteration. Then, we get the following formulas for updating $\U_{*k}$ and $\VV_{*k}$:
\begin{align}
\U_{*k}(t+1)=\text{sign}\big[\frac{\lambda}{c}&\sum_{q=1}^m(\S_{*j_q}-\A_{*{j_q}})V_{j_qk}(t) +\frac{m\lambda^2}{4c^2}\U_{*k}(t)\big], \label{opt:Uk} \\
\VV_{*k}(t+1)= \text{sign}\big[\frac{\lambda}{c}&\sum_{q=1}^m(\S_{i_q*}^T-\A_{i_q*}^T)U_{i_qk}(t)+\frac{m\lambda^2}{4c^2}\VV_{*k}(t)\big], \label{opt:Vk}
\end{align}
where $\{j_q\}_{q=1}^m$ and $\{i_q\}_{q=1}^m$ are the $m$ sampled column and row indices, respectively.

To get the stochastic learning algorithm for DLFH, we only need to substitute (\ref{eq:UkUpdate}) and (\ref{eq:VkUpdate}) in Algorithm~\ref{alg:Framwork} by (\ref{opt:Uk}) and (\ref{opt:Vk}), respectively. Then the computational cost will decrease from $\OM(n^2)$ to $\OM(nm)$, where $m$ is typically far less than $n$.

\subsection{Out-of-Sample Extension}\label{outofsmaple}
For any unseen query points $\x_q\notin\X$ or $\y_q\notin\Y$, we learn two linear hash functions $h_{x}(\x)$ and $h_{y}(\y)$. Specifically, we minimize the following square loss:
\begin{align}
L_{x}(\W_{x})=\Vert\U-\X\W_{x}\Vert^2_F+\gamma_{x}\Vert\W_{x}\Vert^2_F,\quad L_{y}(\W_{y})=\Vert\VV-\Y\W_{y}\Vert^2_F+\gamma_{y}\Vert\W_{y}\Vert^2_F,\nonumber
\end{align}
where $\gamma_{x}$ and $\gamma_{y}$ are the hyper-parameters for regularization terms. Then, we can get: $\W_{x}=(\X^T\X+\gamma_{x}\I)^{-1}\X^T\U,\; \W_{y}=(\Y^T\Y+\gamma_{y}\I)^{-1}\Y^T\VV$. Then we can get the hash functions for out-of-sample extention: $h_x(\x_q)=\text{sign}(\W_{x}^T \x_q),\;h_y(\y_q)=\text{sign}(\W_y^T \y_q)$.

Please note that here we just use linear hash function to demonstrate the effectiveness of our method. In fact, we can adopt stronger functions, such as deep neural networks, for out-of-sample extension to get better performance. But this is not the focus of this paper, and will be left for future study.

\subsection{Complexity Analysis}
We call the DLFH version without stochastic learning strategy \emph{full DLFH}, and call the stochastic version of DLFH \emph{stochastic DLFH}. The time complexity of full DLFH is $\OM(Tcn^2)$, where $T$ and $c$ are typically small constants. The time complexity of stochastic DLFH is $\OM(Tcnm)$, which is much lower than that of the full DLFH when $m\ll n$. In practice, we suggest to adopt the stochastic DLFH because we find that the accuracy of stochastic DLFH is comparable to that of the full DLFH even if $m$ is set to be $c$ which is typically a small constant. When $m = c$, the training of stochastic DLFH is very efficient.


\section{Experiments}\label{experiment}
We utilize two real datasets to evaluate DLFH and baselines. The experiments are performed on a workstation with Intel~(R) CPU E5-2620V2@2.1G of 12 cores and 64G RAM.

\subsection{Datasets}

Two datasets, MIRFLICKR-25K~\cite{DBLP:conf/mir/HuiskesL08} and NUS-WIDE~\cite{DBLP:conf/civr/ChuaTHLLZ09}, are used for evaluation.

The MIRFLICKR-25K dataset contains 25,000 data points, each of which corresponds to an image associated with some textual tags. We only select those points which have at least 20 textual tags for our experiment. The text for each point is represented as a 1386-D bag-of-words~(BOW) vector. The image for each point is represented by a 512-D GIST feature vector. Additionally, each point is manually annotated with one of the 24 unique labels.

The NUS-WIDE dataset contains 260,648 data points, each of which corresponds to an image associated with textual tags. Each point is annotated with one or multiple labels from 81 concept labels. As in SCM~\cite{DBLP:conf/aaai/ZhangL14} and SePH~\cite{DBLP:conf/cvpr/LinDH015}, we select 186,577 points that belong to the 10 most frequent concepts from the original dataset in our experiments. The text for each point is represented as a 1000-D BOW vector. Furthermore, each image is a 500-D bag-of-visual words~(BOVW) vector.

For both two datasets, the ground-truth neighbors~(similar pairs) are defined as those image-text pairs which share at least one common label.

\subsection{Baselines and Evaluation Protocol}

Six state-of-the-art CMH methods are adopted as baselines for comparison. They are SMFH~\cite{DBLP:conf/ijcai/LiuJWH16}, SePH~\cite{DBLP:conf/cvpr/LinDH015}, SCM~\cite{DBLP:conf/aaai/ZhangL14}, CMFH~\cite{DBLP:conf/cvpr/DingGZ14}, CCA-ITQ~\cite{DBLP:journals/pami/GongLGP13} and MLBE~\cite{DBLP:conf/kdd/ZhenY12}. Other methods, such as PDH~\cite{DBLP:conf/icml/RastegariCFHD13}, are not adopted for comparison because they have been found to be outperformed by the adopted baselines like SMFH and SePH. Among the six baselines, CCA-ITQ and CMFH are unsupervised, and others are supervised. CMFH, SCM and SMFH are relaxation-based continuous methods, others are discrete methods. SePH is a kernel-based method. Following its authors' suggestion, we utilize RBF kernel and randomly take 500 data points as kernel bases to learn hash functions. For the other baselines, we set the parameters by following suggestions of the corresponding authors. For our DLFH, we set $\lambda=8$ and $T=30$ which is selected based on a validation set. We adopt the stochastic version of DLFH unless otherwise stated. In stochastic DLFH, $m=c$. All experiments are run 5 times to remove randomness, then the average performance is reported.

Besides a linear classifier for out-of-sample extension mentioned in Section~\ref{outofsmaple}, we also utilize a kernel logistic regression classifier, which is the same as that in SePH, for out-of-sample extension in our DLFH to learn binary codes for both modalities. This kernel version of DLFH is called KDLFH.

We randomly sample 2,000 and 1,867 data points from the whole dataset to construct a query~(test) set and the rest as retrieval~(database) set for MIRFLICKR-25K and NUS-WIDE datasets, respectively. For our DLFH and most baselines except SePH and MLBE, we use the whole retrieval set as training set. Because SePH and MLBE can not scale to large training set, we randomly sample 5,000 and 1,000 data points from retrieval set to construct training set for SePH and MLBE, respectively. We do not adopt 5,000 data points for training MLBE due to an out-of-memory error.

The mean average precision~(MAP)~\cite{DBLP:conf/aaai/ZhangL14} is the most popular evaluation metric for hashing and it's adopted to evaluate our DLFH and baselines.
\subsection{Convergence Analysis}
To verify the convergence property of DLFH, we conduct an experiment on a subset of MIRFLICKR-25K dataset. Specifically, we randomly select 5,000 data points from MIRFLICKR-25K, with 3,000 data points for training and the rest for test~(query).

Figure~\ref{fig:conv} shows the convergence of objective function value (subfigure (a), (c)) and MAP (subfigure (b), (d)) with 32 and 64 bits, where ``DLFH-Full'' denotes the full version of DLFH and ``DLFH-Stochastic'' denotes the stochastic version of DLFH. In the figure, ``$I \to T$'' denotes image-to-text retrieval where we use image modality as queries and then retrieve text from database, and other notations are defined similarly. We can find that the objective function value of DLFH-Full will not decrease as iteration number increases, which verifies the claim about the convergence in Theorem~\ref{theorem:converge}. There is some vibration in DLFH-Stochastic due to the stochastic sampling procedure, but the overall trend is convergent. For MAP, we can also observe the convergence for overall trend. Both DLFH-Full and DLFH-Stochastic converge very fast, and only a small number of iterations are needed.

Another interesting phenomenon is that the accuracy~(MAP) of DLFH-Stochastic is almost the same as that of DLFH-Full. Hence, unless otherwise stated, the DLFH in the following experiment refers to the stochastic version of DLFH.
\begin{figure}[t]
\centering\small
\begin{tabular}{c@{}@{}c@{}@{}c@{}@{}c}
\begin{minipage}{0.24\linewidth}\centering
    \includegraphics[width=1\textwidth,height = 0.85\textwidth]{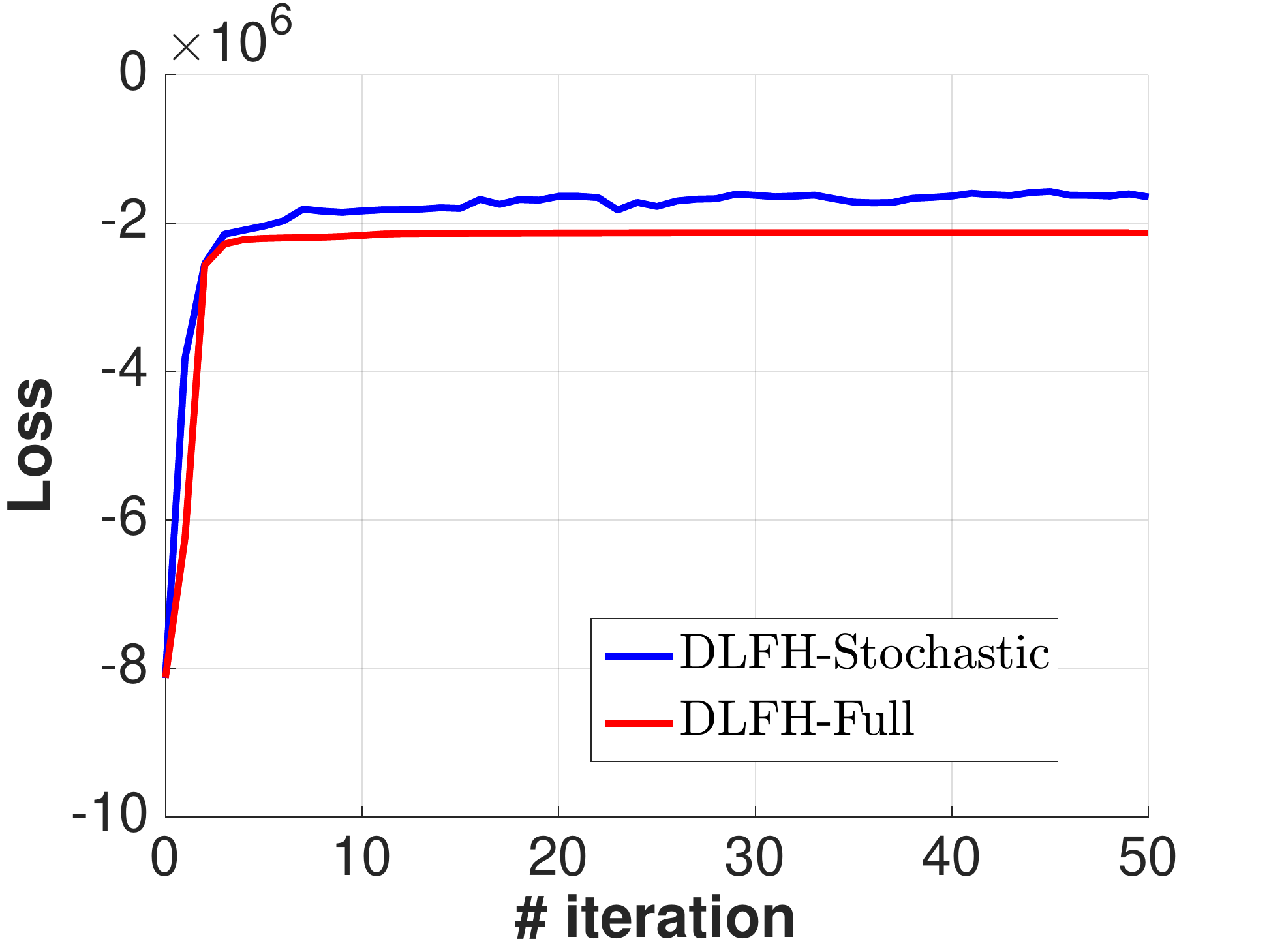}\\
    (a) Obj. value@32 bits
\end{minipage} &
\begin{minipage}{0.24\linewidth}\centering
    \includegraphics[width=1\textwidth,height = 0.85\textwidth]{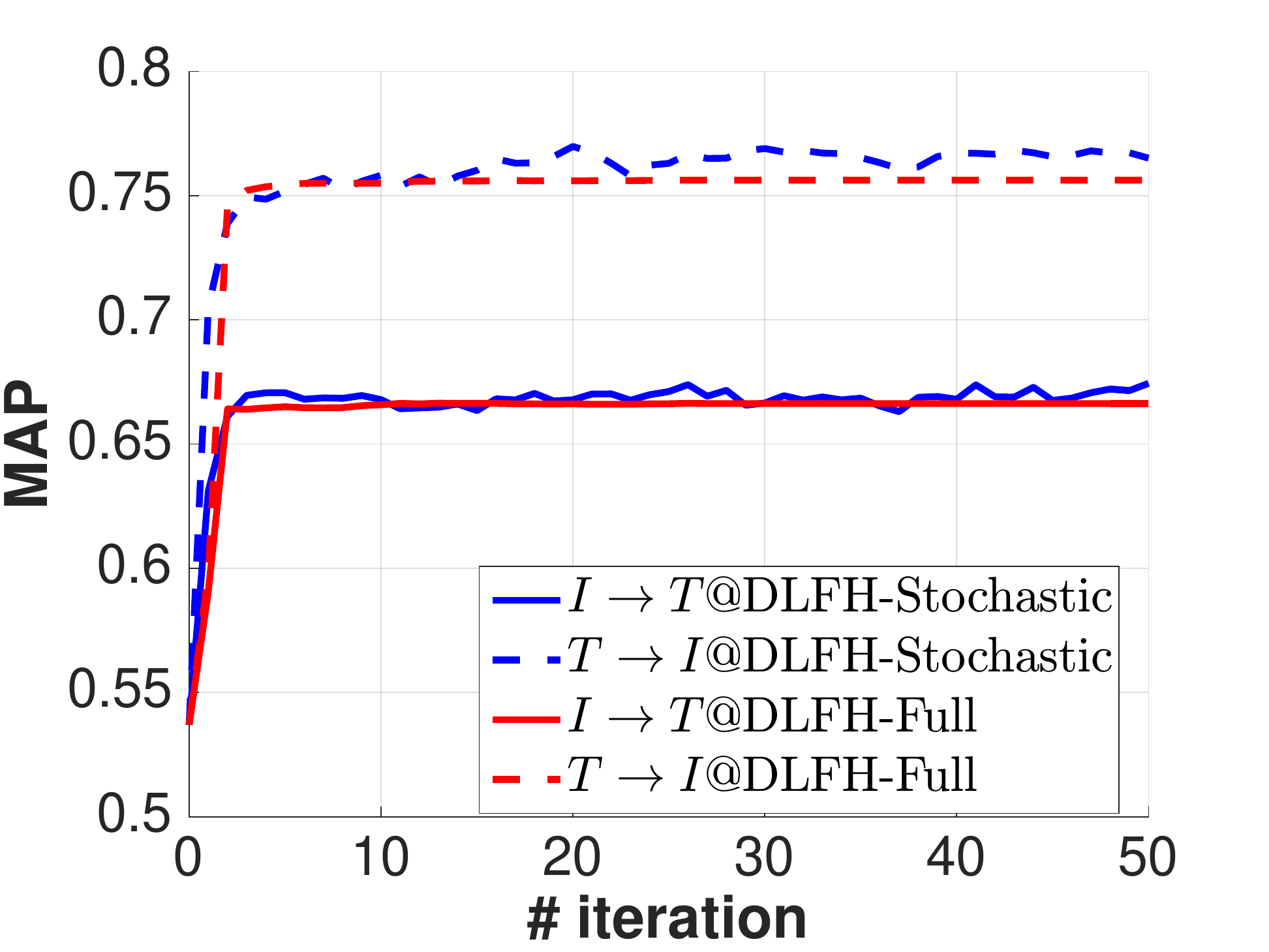}\\
    (b) MAP@32 bits
\end{minipage}
\begin{minipage}{0.24\linewidth}\centering
    \includegraphics[width=1\textwidth,height = 0.85\textwidth]{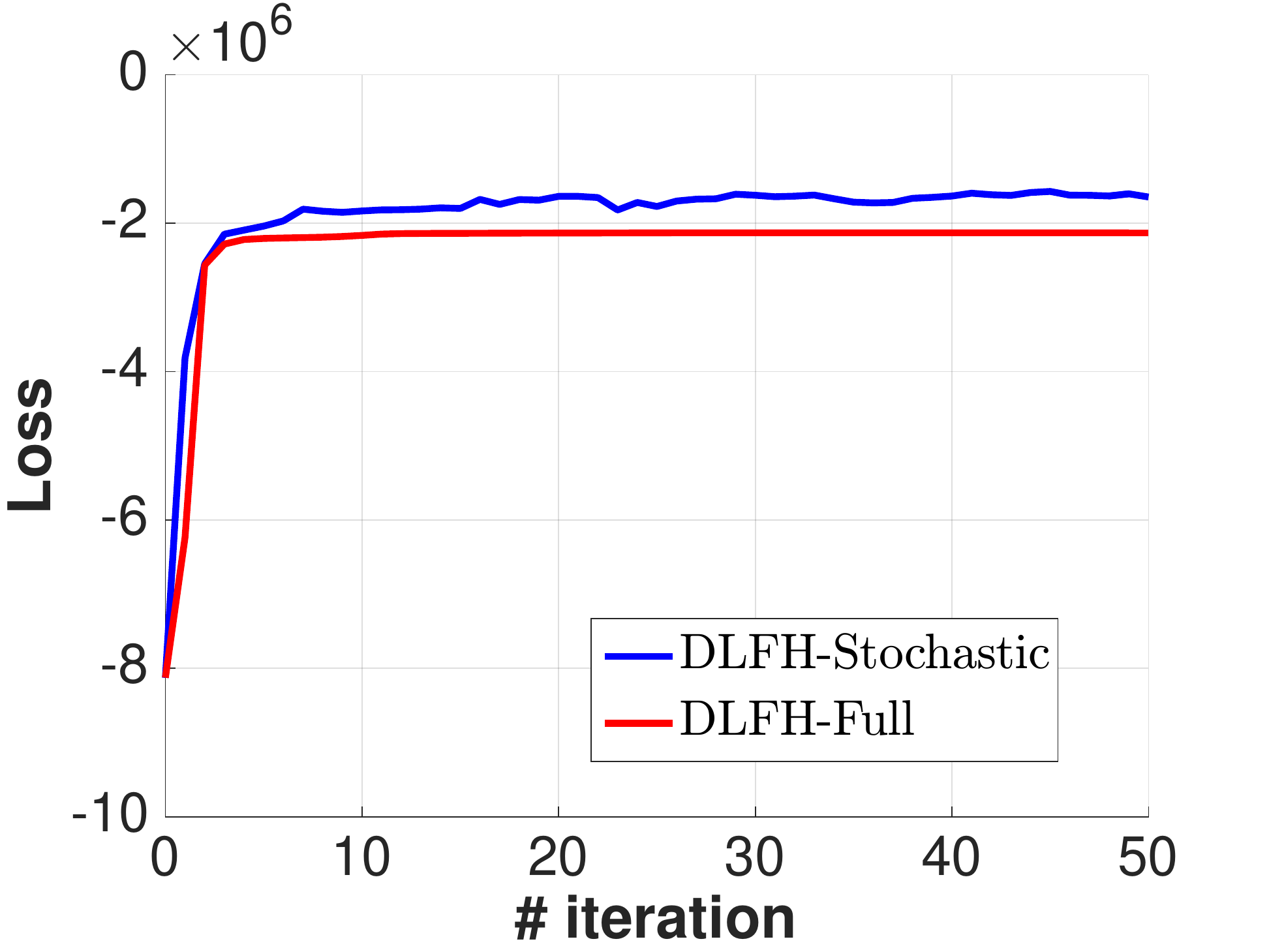}\\
    (c) Obj. value@64 bits
\end{minipage} &
\begin{minipage}{0.24\linewidth}\centering
    \includegraphics[width=1\textwidth,height = 0.85\textwidth]{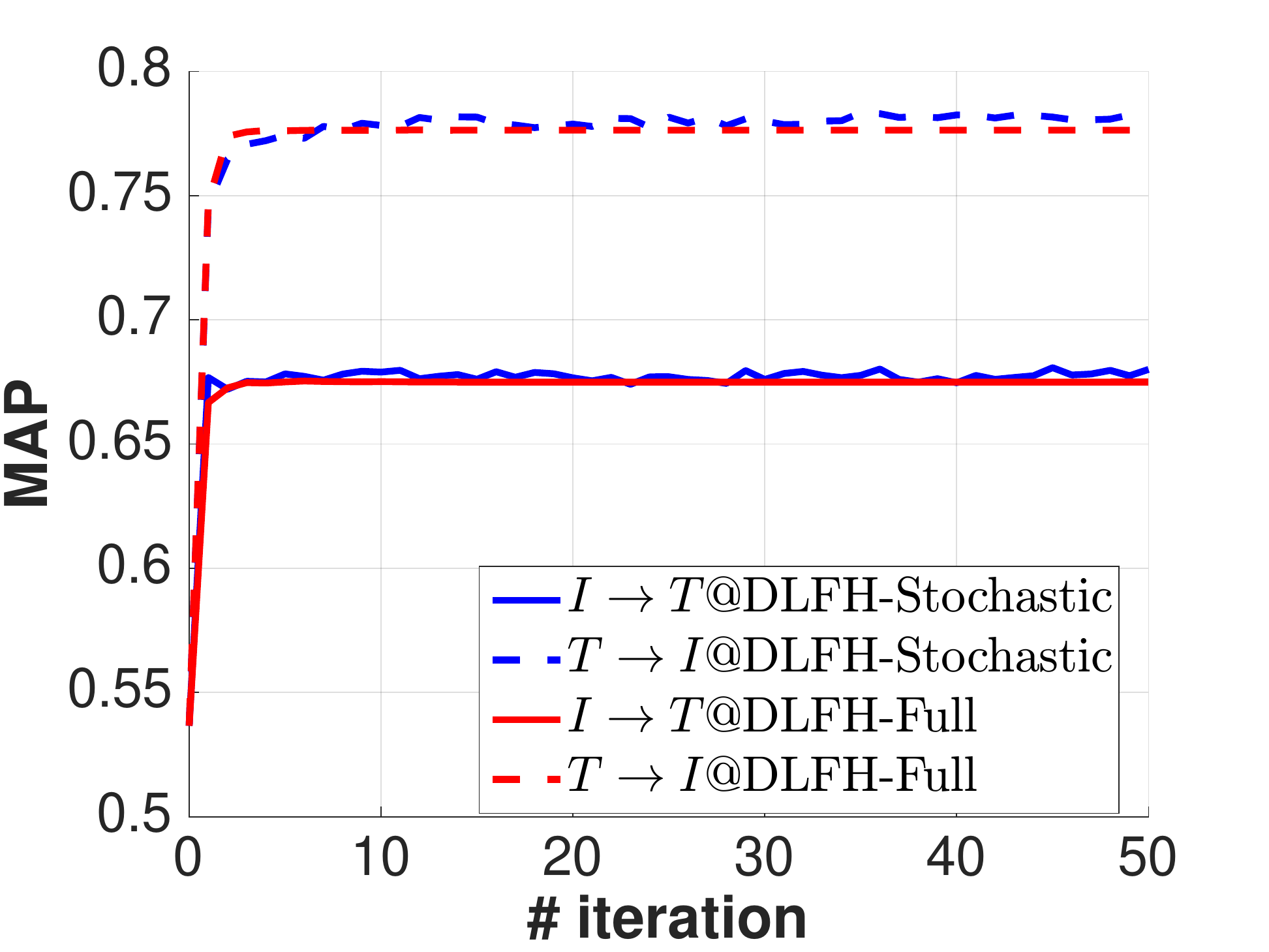}\\
    (d) MAP@64 bits
\end{minipage}
\end{tabular}
\caption{Convergence analysis of DLFH.}
\label{fig:conv}
\end{figure}

\subsection{Accuracy}
Table~\ref{table:map} presents the MAP results on two datasets. By comparing SePH, SMFH, SCM to CMFH and CCA-ITQ, we can see that supervised methods can outperform unsupervised methods in most cases.  By comparing SePH to other methods, we can find that in general discrete methods can outperform relaxation-based continuous methods. Please note that MLBE is a special case because the training of MLBE is too slow and we have to sample only a very small subset for training. Hence, its accuracy is low although it is supervised and discrete. We can find that our DLFH can significantly outperform all baselines in all cases for both image-to-text and text-to-image retrieval tasks. Furthermore, KDLFH can further improve retrieval accuracy thanks to the non-linear out-of-sample extension strategy.

\begin{table*}[t]
\centering\small
\caption{MAP (avg.$\pm$std.). The best accuracy is shown in boldface.}
\label{table:map}
\begin{tabular}{|c|c||c|c|c||c|c|c|}
 \hline
 \multirow{2}{*}{Task} & \multirow{2}{*}{Method} &
 \multicolumn{3}{c||}{{MIRFLICKR-25K}} &  \multicolumn{3}{c|}{{NUS-WIDE}}\\
 \cline{3-8} & & 16 bits & 32 bits & 64 bits & 16 bits & 32 bits & 64 bits \\
 \hline \hline
\cline{2-8}
 & DLFH & {\bf .705$\pm$.002} & {\bf .718$\pm$.005} & {\bf .723$\pm$.002} & {\bf .635$\pm$.005} & {\bf .658$\pm$.003} & {\bf .668$\pm$.003} \\
 \cline{2-8}
 & SMFH & {.584$\pm$.005} & {.608$\pm$.004} & {.624$\pm$.006} & {.447$\pm$.019} & {.457$\pm$.004} & {.463$\pm$.003} \\\cline{2-8}
 & SCM & {.636$\pm$.005} & {.644$\pm$.005} & {.653$\pm$.005} & {.522$\pm$.000} &{.548$\pm$.000} & {.556$\pm$.000}\\\cline{2-8}
\multirow{2}{*}{$I\to T$}
 & CMFH & {.586$\pm$.004} & {.586$\pm$.003} & {.585$\pm$.003} & {.439$\pm$.006} & {.444$\pm$.008} & {.451$\pm$.001}\\\cline{2-8}
 & CCA-ITQ & {.576$\pm$.001} & {.570$\pm$.001} & {.567$\pm$.001} & {.396$\pm$.000} & {.382$\pm$.000} & {.370$\pm$.000} \\\cline{2-8}
 & MLBE & {.560$\pm$.006} & {.548$\pm$.003} & {.543$\pm$.014} & {.334$\pm$.001} & {.336$\pm$.001} & {.361$\pm$.003}\\\Xcline{2-8}{1pt}
 & KDLFH  & {\bf .768$\pm$.004} & {\bf .790$\pm$.004} & {\bf .800$\pm$.001}  & {\bf .684$\pm$.002} & {\bf .700$\pm$.003} & {\bf .715$\pm$.001}\\\cline{2-8}
 & SePH & {.657$\pm$.004} & {.659$\pm$.003} & {.663$\pm$.003} & {.546$\pm$.006} & {.550$\pm$.007} & {.557$\pm$.007}\\
 \cline{2-8}
\hline\hline
 & DLFH & {\bf .781$\pm$.004} & {\bf .803$\pm$.002} & {\bf .817$\pm$.003}  & {\bf .769$\pm$.006} & {\bf .801$\pm$.004} & {\bf .818$\pm$.002}\\\cline{2-8}
 & SMFH & {.569$\pm$.003} & {.584$\pm$.006} & {.597$\pm$.006} & {.413$\pm$.021} & {.411$\pm$.003} & {.402$\pm$.002}\\\cline{2-8}
 & SCM & {.626$\pm$.006} & {.634$\pm$.008} & {.643$\pm$.006}  & {.492$\pm$.000} & {.515$\pm$.000} & {.523$\pm$.000}\\\cline{2-8}
\multirow{2}{*}{$T\to I$}
 & CMFH & {.582$\pm$.005} & {.582$\pm$.003} & {.583$\pm$.003} & {.427$\pm$.006} & {.432$\pm$.005} & {.439$\pm$.002} \\\cline{2-8}
 & CCA-ITQ & {.575$\pm$.001} & {.570$\pm$.001} & {.567$\pm$.001} & {.390$\pm$.000} & {.378$\pm$.000} & {.368$\pm$.000}\\\cline{2-8}
 & MLBE & {.594$\pm$.024} & {.568$\pm$.014} & {.534$\pm$.055} & {.345$\pm$.000} & {.345$\pm$.000} & {.346$\pm$.000} \\\Xcline{2-8}{1pt}
 & KDLFH & {\bf .816$\pm$.006} & {\bf .845$\pm$.004} & {\bf .850$\pm$.004} & {\bf .776$\pm$.014} & {\bf .815$\pm$.005} & {\bf .829$\pm$.003} \\\cline{2-8}
 & SePH & {.689$\pm$.005} & {.695$\pm$.005} & {.698$\pm$.003} & {.632$\pm$.009} & {.642$\pm$.005} & {.650$\pm$.009}\\
 \cline{2-8}
\hline
 \end{tabular}
\end{table*}

%

\subsection{Training Speed}

To evaluate the training speed of DLFH, we adopt different number of data points from retrieval set to construct training set and then report the training time. Table~\ref{table:traintime} presents the training time for our DLFH and baselines. Please note that ``-'' denotes that we cannot carry out corresponding experiments due to out-of-memory errors. We can find that the unsupervised method CCA-ITQ is the fasted method because it does not use supervised information for training. Although the training of CCA-ITQ is fast, the accuracy of it is low. Hence, CCA-ITQ is not practical in real applications. By comparing MLBE and SePH to SMFH, SCM and CMFH, we can find that existing discrete methods are much slower than relaxation-based continuous methods. Although our DLFH is a discrete method, its training speed can be comparable to relaxation-based continuous methods. Furthermore, KDLFH is also much faster than the kernel baseline SePH.

Although KDLFH can achieve better accuracy than DLFH, the training speed of KDLFH is much slower than that of DLFH. Hence, in real applications, we provide users with two choices between DLFH and KDLFH based on whether they care more about training speed or accuracy.

Overall, our DLFH and KDLFH methods achieve the best accuracy with a relatively fast training speed. In particular, our methods can significantly outperform relaxation-based continuous methods in terms of accuracy, but with a comparable training speed. Furthermore, our methods can significantly outperform existing discrete methods in terms of both accuracy and training speed.
\begin{table}[t]
\centering\small
\caption{Training time~(in second) on subsets of NUS-WIDE.}
\label{table:traintime}
\begin{tabular}{|c||c|c|c|c|c|c|}
 \hline
 Method & 1K & 5K & 10K & 50K & 100K & $\sim$184K\\
 \hline
 DLFH & 1.26 & 3.84 & 6.61 & 34.88 & 69.25 & 112.88\\\cline{1-7}
 SMFH & 1.60 & 9.84 & 14.15 & 46.94 & 82.15 & 161.04\\\cline{1-7}
 SCM & 10.41 & 11.13 & 11.09 & 11.34 & 13.53 & 12.30\\\cline{1-7}
 CMFH & 4.20 & 12.01 & 20.65 & 84.00 & 165.54 & 305.51 \\\cline{1-7}
 CCA-ITQ & 0.56 & 0.59 & 0.69 & 1.51 & 2.52 & 4.25 \\\cline{1-7}
 MLBE & 998.98 & - & - & - & - & -\\\Xcline{1-7}{1pt}
 KDLFH & 56.58 & 214.65 & 479.46 & 2097.52 & 4090.21 & 7341.05 \\\cline{1-7}
 SePH & 80.54 & 606.18 & - & - & - & -\\\cline{1-7}
 \hline
 \end{tabular}
\end{table}
\subsection{Sensitivity to Hyper-Parameter}
The most important hyper-parameters for DLFH is $\lambda$ and the number of sampled points $m$. We explore the influence to these two hyper-parameters.

We report the MAP values for different $\lambda$ from the range of $[10^{-4},32]$ on two datasets with the code length being 16 bits. The results are shown in Figure~\ref{fig:hyperparam}~(a) and Figure~\ref{fig:hyperparam}~(b). We can find that DLFH is not sensitive to $\lambda$ in a large range when $1<\lambda<16$.

Furthermore, we present the influence of $m$ in Figure~\ref{fig:hyperparam}~(c) and Figure~\ref{fig:hyperparam}~(d). We can find that higher accuracy can be achieved by sampling more points in each iteration of stochastic DLFH. However, more sampled points will require higher computation cost. Hence, we simply set $m=c$ in our experiment to get a good tradeoff between accuracy and efficiency.

\begin{figure}[t]
\centering
\begin{tabular}{c@{}@{}c@{}@{}c@{}@{}c}
\begin{minipage}{0.24\linewidth}\centering
    \includegraphics[width=1\textwidth,height = 0.85\textwidth]{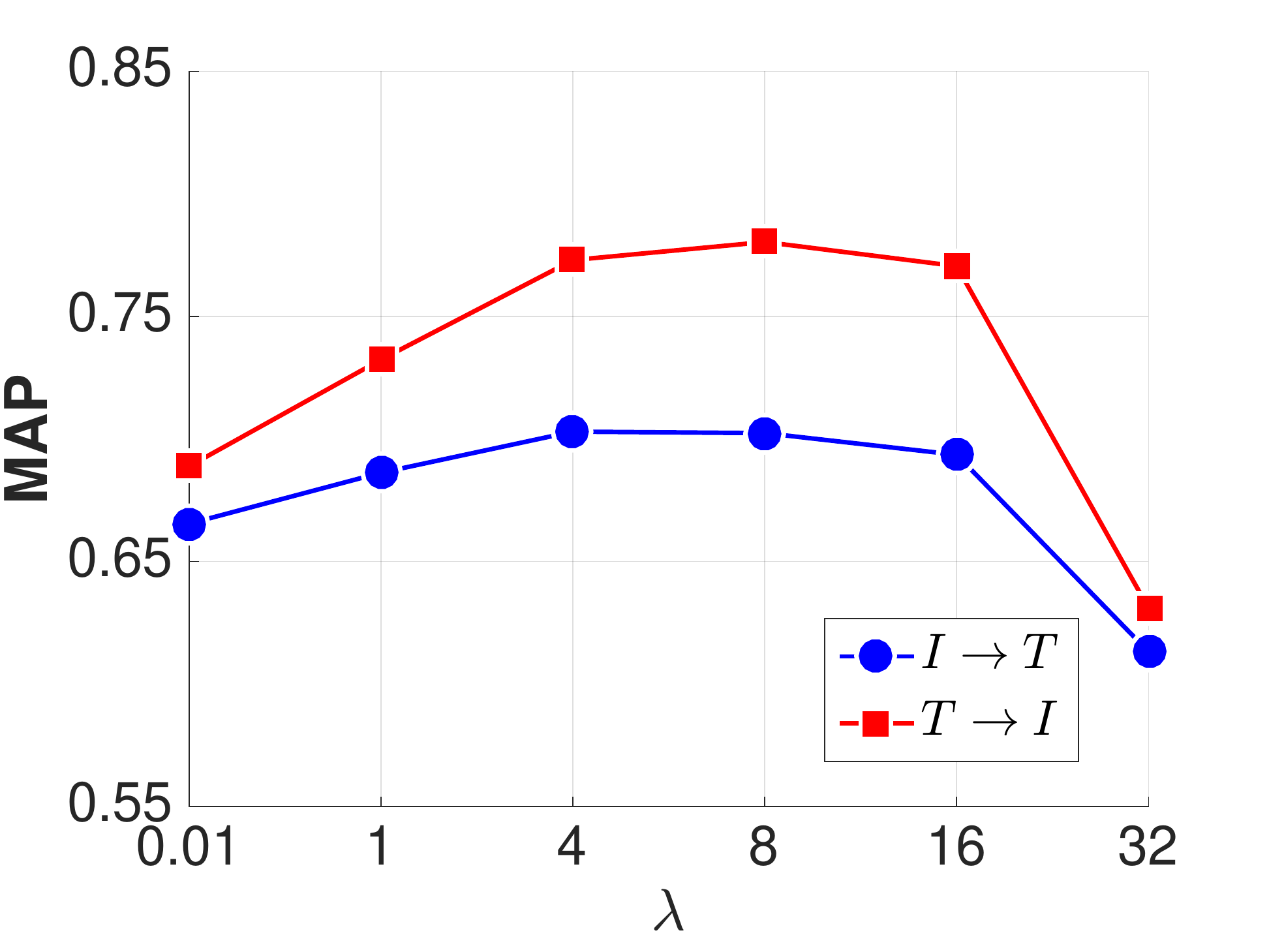}\\
    (a) MIRFLICKR-25K
\end{minipage} &
\begin{minipage}{0.24\linewidth}\centering
    \includegraphics[width=1\textwidth,height = 0.85\textwidth]{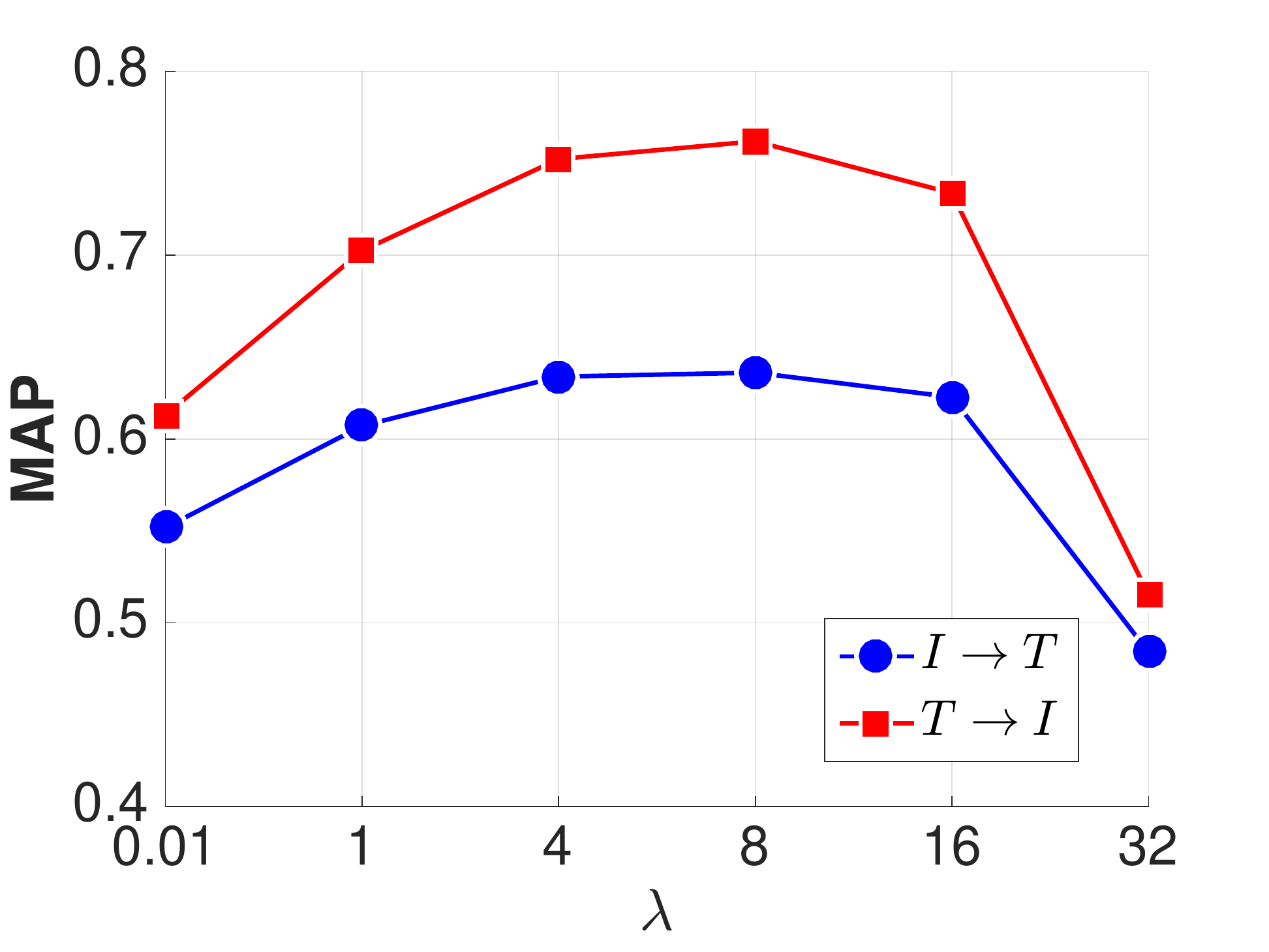}\\
    (b) NUS-WIDE
\end{minipage} &
\begin{minipage}{0.24\linewidth}\centering
    \includegraphics[width=1\textwidth,height = 0.85\textwidth]{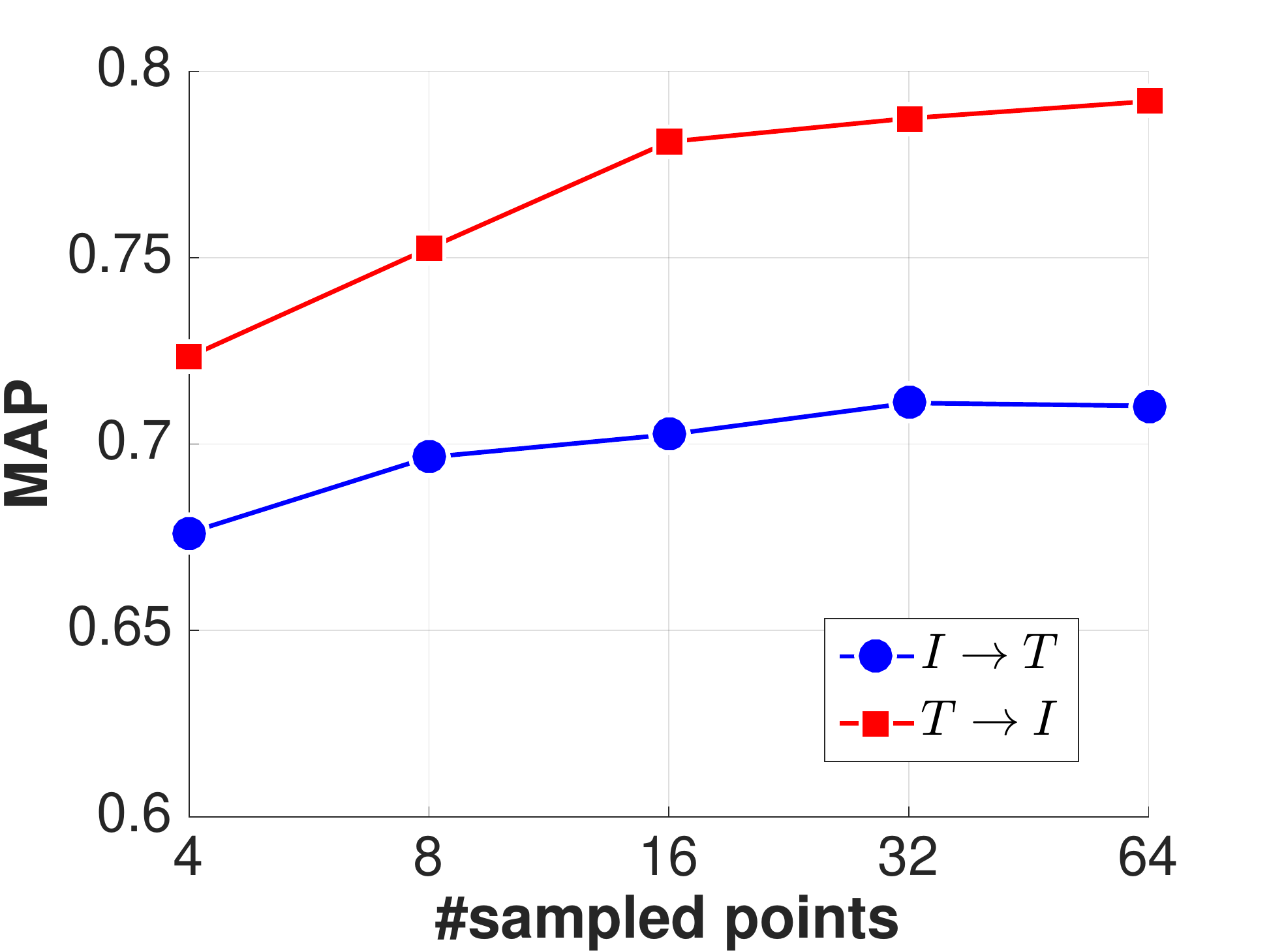}\\
    (c) MIRFLICKR-25K
\end{minipage} &
\begin{minipage}{0.24\linewidth}\centering
    \includegraphics[width=1\textwidth,height = 0.85\textwidth]{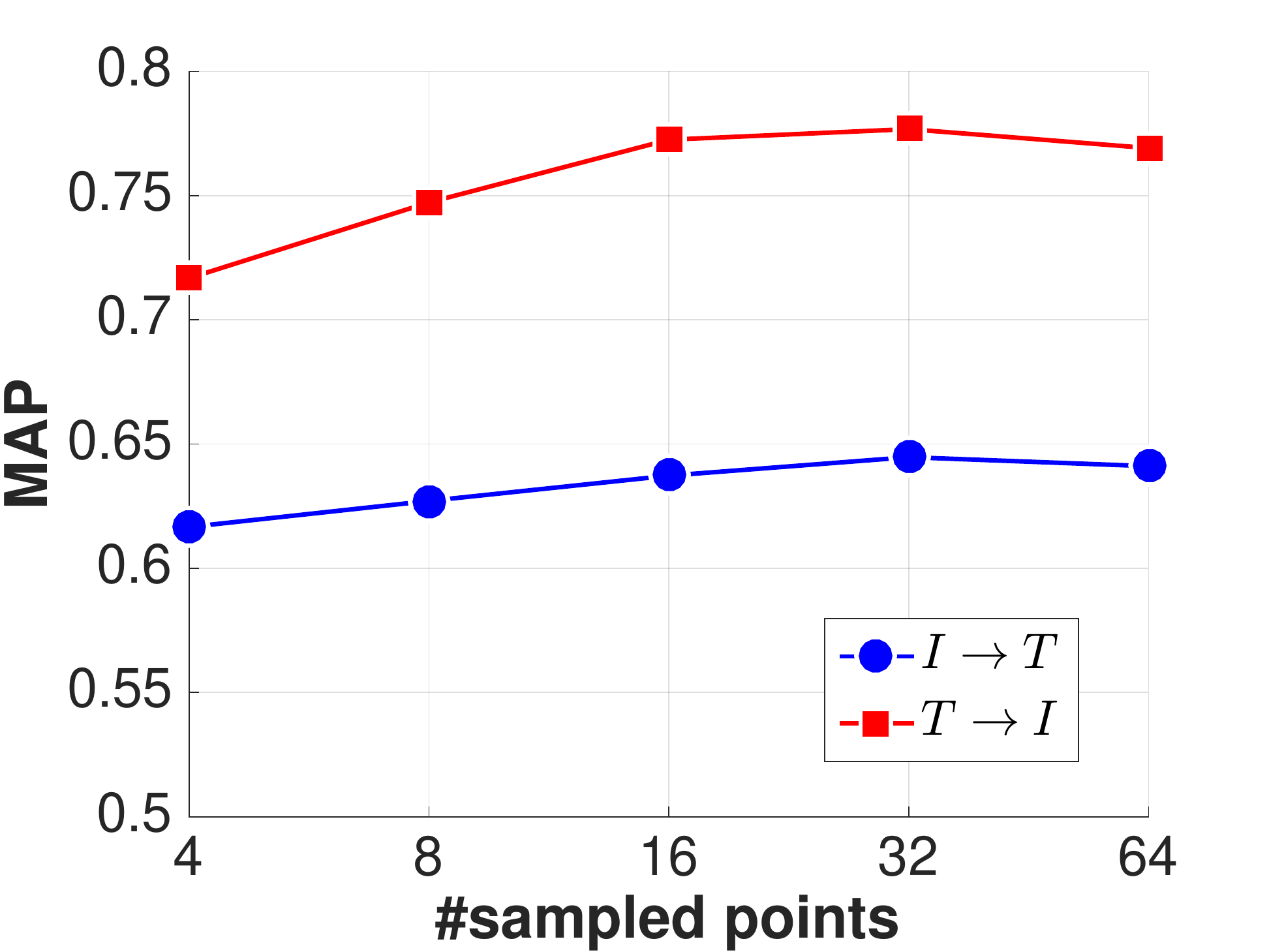}\\
    (d) NUS-WIDE
\end{minipage}
\end{tabular}
\caption{MAP values with different $\lambda$ and different number of sampled points.}
\label{fig:hyperparam}
\end{figure}

\section{Conclusion}\label{conclusion}
In this paper, we propose a novel cross-modal hashing method, called \underline{d}iscrete \underline{l}atent \underline{f}actor model based cross-modal \underline{h}ashing~(DLFH), for cross-modal similarity search in large-scale datasets. DLFH is a discrete method which can directly learn the binary hash codes, and at the same time it is efficient. Experiments on real datasets show that DLFH can significantly outperform relaxation-based continuous methods in terms of accuracy, but with a comparable training speed. Furthermore, DLFH can significantly outperform existing discrete methods in terms of both accuracy and training speed.

\small
\bibliography{example_paper}

\begin{thebibliography}{10}

\bibitem{DBLP:conf/focs/AndoniI06}
A.~Andoni and P.~Indyk.
\newblock Near-optimal hashing algorithms for approximate nearest neighbor in
  high dimensions.
\newblock In {\em {FOCS}}, 2006.

\bibitem{DBLP:conf/nips/AndoniILRS15}
A.~Andoni, P.~Indyk, T.~Laarhoven, I.~P. Razenshteyn, and L.~Schmidt.
\newblock Practical and optimal {LSH} for angular distance.
\newblock In {\em {NIPS}}, 2015.

\bibitem{DBLP:conf/stoc/AndoniR15}
A.~Andoni and I.~P. Razenshteyn.
\newblock Optimal data-dependent hashing for approximate near neighbors.
\newblock In {\em {STOC}}, 2015.

\bibitem{DBLP:conf/cvpr/BronsteinBMP10}
M.~M. Bronstein, A.~M. Bronstein, F.~Michel, and N.~Paragios.
\newblock Data fusion through cross-modality metric learning using
  similarity-sensitive hashing.
\newblock In {\em {CVPR}}, 2010.

\bibitem{DBLP:conf/cvpr/Carreira-Perpinan15}
M.~{\'{A}}. Carreira{-}Perpi{\~{n}}{\'{a}}n and R.~Raziperchikolaei.
\newblock Hashing with binary autoencoders.
\newblock In {\em {CVPR}}, 2015.

\bibitem{DBLP:conf/civr/ChuaTHLLZ09}
T.~Chua, J.~Tang, R.~Hong, H.~Li, Z.~Luo, and Y.~Zheng.
\newblock {NUS-WIDE:} a real-world web image database from national university
  of singapore.
\newblock In {\em {CIVR}}, 2009.

\bibitem{DBLP:conf/cvpr/DingGZ14}
G.~Ding, Y.~Guo, and J.~Zhou.
\newblock Collective matrix factorization hashing for multimodal data.
\newblock In {\em {CVPR}}, 2014.

\bibitem{DBLP:journals/pami/GongLGP13}
Y.~Gong, S.~Lazebnik, A.~Gordo, and F.~Perronnin.
\newblock Iterative quantization: {A} procrustean approach to learning binary
  codes for large-scale image retrieval.
\newblock {\em {IEEE} Trans. Pattern Anal}, 35(12):2916--2929, 2013.

\bibitem{DBLP:conf/cvpr/GordoP11}
A.~Gordo and F.~Perronnin.
\newblock Asymmetric distances for binary embeddings.
\newblock In {\em {CVPR}}, 2011.

\bibitem{DBLP:conf/mir/HuiskesL08}
M.~J. Huiskes and M.~S. Lew.
\newblock The {MIR} flickr retrieval evaluation.
\newblock In {\em {SIGMM}}, 2008.

\bibitem{DBLP:conf/iccv/IrieAT15}
G.~Irie, H.~Arai, and Y.~Taniguchi.
\newblock Alternating co-quantization for cross-modal hashing.
\newblock In {\em {ICCV}}, 2015.

\bibitem{DBLP:conf/nips/KulisD09}
B.~Kulis and T.~Darrell.
\newblock Learning to hash with binary reconstructive embeddings.
\newblock In {\em {NIPS}}, 2009.

\bibitem{DBLP:conf/ijcai/KumarU11}
S.~Kumar and R.~Udupa.
\newblock Learning hash functions for cross-view similarity search.
\newblock In {\em {IJCAI}}, 2011.

\bibitem{DBLP:journals/jcgs/LangeHY00}
K.~Lange, D.~R. Hunter, and I.~Yang.
\newblock Optimization transfer using surrogate objective functions.
\newblock {\em Journal of Computational and Graphical Statistics}, 9(1):1--20,
  2000.

\bibitem{DBLP:conf/cvpr/LinDH015}
Z.~Lin, G.~Ding, M.~Hu, and J.~Wang.
\newblock Semantics-preserving hashing for cross-view retrieval.
\newblock In {\em {CVPR}}, 2015.

\bibitem{DBLP:conf/ijcai/LiuJWH16}
H.~Liu, R.~Ji, Y.~Wu, and G.~Hua.
\newblock Supervised matrix factorization for cross-modality hashing.
\newblock In {\em {IJCAI}}, 2016.

\bibitem{DBLP:conf/nips/LiuMKC14}
W.~Liu, C.~Mu, S.~Kumar, and S.~Chang.
\newblock Discrete graph hashing.
\newblock In {\em {NIPS}}, 2014.

\bibitem{DBLP:conf/icml/LiuWKC11}
W.~Liu, J.~Wang, S.~Kumar, and S.~Chang.
\newblock Hashing with graphs.
\newblock In {\em {ICML}}, 2011.

\bibitem{DBLP:conf/icml/NeyshaburS15}
B.~Neyshabur and N.~Srebro.
\newblock On symmetric and asymmetric lshs for inner product search.
\newblock In {\em {ICML}}, 2015.

\bibitem{DBLP:conf/icml/NorouziF11}
M.~Norouzi and D.~J. Fleet.
\newblock Minimal loss hashing for compact binary codes.
\newblock In {\em {ICML}}, 2011.

\bibitem{DBLP:conf/nips/0002FS12}
M.~Norouzi, D.~J. Fleet, and R.~Salakhutdinov.
\newblock Hamming distance metric learning.
\newblock In {\em {NIPS}}, 2012.

\bibitem{DBLP:conf/nips/RaginskyL09}
M.~Raginsky and S.~Lazebnik.
\newblock Locality-sensitive binary codes from shift-invariant kernels.
\newblock In {\em {NIPS}}, 2009.

\bibitem{DBLP:conf/icml/RastegariCFHD13}
M.~Rastegari, J.~Choi, S.~Fakhraei, H.~D. III, and L.~S. Davis.
\newblock Predictable dual-view hashing.
\newblock In {\em {ICML}}, 2013.

\bibitem{DBLP:conf/nips/Raziperchikolaei16}
R.~Raziperchikolaei and M.~{\'{A}}. Carreira{-}Perpi{\~{n}}{\'{a}}n.
\newblock Optimizing affinity-based binary hashing using auxiliary coordinates.
\newblock In {\em {NIPS}}, 2016.

\bibitem{DBLP:conf/cvpr/ShenSLS15}
F.~Shen, C.~Shen, W.~Liu, and H.~T. Shen.
\newblock Supervised discrete hashing.
\newblock In {\em {CVPR}}, 2015.

\bibitem{DBLP:conf/icml/Shrivastava014}
A.~Shrivastava and P.~Li.
\newblock Densifying one permutation hashing via rotation for fast near
  neighbor search.
\newblock In {\em {ICML}}, 2014.

\bibitem{DBLP:conf/mm/SongYHSH11}
J.~Song, Y.~Yang, Z.~Huang, H.~T. Shen, and R.~Hong.
\newblock Multiple feature hashing for real-time large scale near-duplicate
  video retrieval.
\newblock In {\em {ACM MM}}, 2011.

\bibitem{DBLP:conf/icml/WangKC10}
J.~Wang, S.~Kumar, and S.~Chang.
\newblock Sequential projection learning for hashing with compact codes.
\newblock In {\em {ICML}}, 2010.

\bibitem{DBLP:conf/nips/WeissTF08}
Y.~Weiss, A.~Torralba, and R.~Fergus.
\newblock Spectral hashing.
\newblock In {\em {NIPS}}, 2008.

\bibitem{DBLP:conf/aaai/ZhangL14}
D.~Zhang and W.~Li.
\newblock Large-scale supervised multimodal hashing with semantic correlation
  maximization.
\newblock In {\em {AAAI}}, 2014.

\bibitem{DBLP:conf/sigir/ZhangWS11}
D.~Zhang, F.~Wang, and L.~Si.
\newblock Composite hashing with multiple information sources.
\newblock In {\em {SIGIR}}, 2011.

\bibitem{DBLP:conf/sigir/ZhangZLG14}
P.~Zhang, W.~Zhang, W.~Li, and M.~Guo.
\newblock Supervised hashing with latent factor models.
\newblock In {\em {SIGIR}}, 2014.

\bibitem{DBLP:conf/nips/ZhenY12}
Y.~Zhen and D.~Yeung.
\newblock Co-regularized hashing for multimodal data.
\newblock In {\em {NIPS}}, 2012.

\bibitem{DBLP:conf/kdd/ZhenY12}
Y.~Zhen and D.~Yeung.
\newblock A probabilistic model for multimodal hash function learning.
\newblock In {\em {SIGKDD}}, 2012.

\end{thebibliography}
\bibliographystyle{abbrv}

\end{document}